\newcommand{\bl}{\textcolor{black}}
\definecolor{shadecolor}{rgb}{1,0,0}
\newtheorem{lemma}{\textbf{Lemma}}
\begin{document}
\title{Digital Twin-Based 3D Map Management for Edge-Assisted Mobile Augmented Reality}

{\setstretch{1.0}
	\author{
	\IEEEauthorblockN{Conghao~Zhou\IEEEauthorrefmark{1},~Jie~Gao\IEEEauthorrefmark{2},~Mushu~Li\IEEEauthorrefmark{3},~Nan~Cheng\IEEEauthorrefmark{4},~Xuemin (Sherman)~Shen\IEEEauthorrefmark{1},~and~Weihua Zhuang\IEEEauthorrefmark{1}}
	    \IEEEauthorblockA{\IEEEauthorrefmark{1}Department~of~Electrical~and~Computer~Engineering,~University~of~Waterloo,~Canada
	    \\\IEEEauthorrefmark{2}School of Information Technology, Carleton University,~Canada
	    \\\IEEEauthorrefmark{3}Department of Electrical, Computer, and Biomedical Engineering, Toronto Metropolitan University,~Canada
	    \\\IEEEauthorrefmark{4}School~of~Telecommunication Engineering,~Xidian University,~Xi'an,~China
	    \\\{c89zhou, sshen, wzhuang\}@uwaterloo.ca, jie.gao6@carleton.ca, mushu1.li@ryerson.ca, dr.nan.cheng@ieee.org}
			}
}

\maketitle

\begin{abstract} 

In this paper, we design a 3D map management scheme for edge-assisted mobile augmented reality (MAR) to support the pose estimation of individual MAR device, which uploads camera frames to an edge server. Our objective is to minimize the pose estimation uncertainty of the MAR device by periodically selecting a proper set of camera frames for uploading to update the 3D map. To address the challenges of the dynamic uplink data rate and the time-varying pose of the MAR device, we propose a digital twin (DT)-based approach to 3D map management. First, a DT is created for the MAR device, which emulates 3D map management based on predicting subsequent camera frames. Second, a model-based reinforcement learning (MBRL) algorithm is developed, utilizing the data collected from both the actual and the emulated data to manage the 3D map. With extensive emulated data provided by the DT, the MBRL algorithm can quickly provide an adaptive map management policy in a highly dynamic environment. Simulation results demonstrate that the proposed DT-based 3D map management outperforms benchmark schemes by achieving lower pose estimation uncertainty and higher data efficiency in dynamic environments.

\end{abstract}

\section{Introduction}

\bl{With future sixth-generation (6G) networks,} immersive communications are anticipated to revolutionize the current communication paradigm by providing users with highly realistic and interactive experiences~\cite{shen2021holistic}. As a representative type of immersive communications, mobile augmented reality (MAR) aims to seamlessly integrate virtual objects into physical environments, enabling users to interact with lifelike virtual objects superimposed on their immediate surroundings via MAR devices. In the 6G era, MAR is expected to play an important role in various applications, such as education and tourism, due to its potential in boosting user engagement~\cite{ahn2020novel}.

In MAR applications, attaining to high immersion necessitates the accurate geometric alignment of virtual objects with each MAR device's physical environment. To achieve this, a 3D map consisting of the 3D positions of feature points (e.g., distinguishable landmarks) in the physical environment is commonly utilized, in which each feature point corresponds to a 3D \emph{map point}~\cite{ran2019sharear}. By matching the 3D map points detected in the current camera frame captured by an MAR device with those contained in the 3D map, the pose of the MAR device, as well as the spatial layout and geometric information about its immediate surroundings, can be estimated. This procedure can establish the accurate perspective and spatial relationships required for anchoring virtual objects to the 3D map points detected in each camera frame. As a result, including a proper set of 3D map points in the 3D map is vital for accurate pose estimation~\cite{campos2021orb}. Nevertheless, managing (i.e., building and updating) a 3D map in an MAR application can be both computation- and data-intensive due to the need for processing and storing a large amount of 3D map points in each camera frame, which poses a significant challenge to MAR devices with limited resources~\cite{cheng2019space, ma2023nomore}.

Mobile edge computing for MAR applications, which can leverage the resources of edge servers and the ubiquitous connectivity provided by communication networks, can assist MAR devices in managing their 3D maps~\cite{ahn2020novel}. In edge-assisted MAR, camera frames captured by an MAR device can be uploaded to a nearby edge server, and the 3D map of the MAR device can be built and updated by the edge sever based on the uploaded camera frames. While MEC can alleviate the computing and data storage burdens on MAR devices, effectively managing a 3D map for individual MAR device still faces two key challenges. First, each MAR device may frequently change its pose, i.e., its position and orientation, leading to a time-varying set of 3D map points in its camera frame sequence. Moreover, the temporal variations in the poses of two MAR devices may be different even if they are in the same environment~\cite{chen2018marvel}. Thus, for each MAR device, building a customized 3D map and timely updating the 3D map to cope with its unique time-varying pose are challenging. Second, given the limited uplink communication resource of an MAR device, selecting the optimal subset of camera frames for uploading to update the 3D map without compromising pose estimation accuracy is another challenge, which is aggravated by the MAR device's dynamic uplink data rate~\cite{apicharttrisorn2020characterization}. 

Several studies have made efforts to address the aforementioned challenges. Han~\emph{et~al.} concentrated on the proactive caching of virtual objects by modeling the 2D trajectory of an MAR device as a discrete Markov chain~\cite{han2022intelli}. In the context of edge-assisted MAR, Dhakal~\emph{et~al.} investigated managing a 3D map for multiple users under the assumption that MAR devices can upload all camera frames without being limited by resource constraints~\cite{ran2019sharear}. Chen~\emph{et~al.} derived the performance metric, i.e.,~\emph{uncertainty}, to quantify the impact of a set of 3D map points on pose estimation accuracy~\cite{chen2023adaptslam}. However, existing 3D maps in MAR are myopically updated based on the current uplink data rate and pose of each MAR device, which may not adapt to highly dynamic environments.

In this paper, we design a digital twin (DT)-based 3D map management scheme for edge-assisted MAR to adapt to the time-varying pose and the dynamic uplink data rate of individual MAR device subject to resource constraints. An edge server builds a 3D map to support the pose estimation of the MAR device which can also upload camera frames to the edge server for updating the 3D map. Our objective is to minimize the long-term pose estimation uncertainty of the MAR device by periodically selecting a proper subset of cameras frames to upload for 3D map update. The main contributions of this paper are as follows: 

	\begin{itemize}
		\item We design a digital twin (DT)-based 3D map management scheme. \bl{A DT created for an MAR device can assist in predicting the 3D map points in future camera frames, thereby supporting the emulation of 3D map management and providing emulated data to facilitate personalized 3D map management.}

		\item We develop an adaptive and data-efficient 3D map management algorithm featuring model-based reinforcement learning (MBRL). By leveraging the combination of the real data from actual 3D map management and the emulated data from the DT, the algorithm can quickly provide an adaptive 3D map management policy in highly dynamic environments.

	\end{itemize}


\section{System Model and Problem Formulation}

In this section, we introduce the system model and formulate the 3D map management problem.

\subsection{Considered Scenario}

    \begin{figure}[t]
        \centering
        \includegraphics[width=0.42\textwidth]{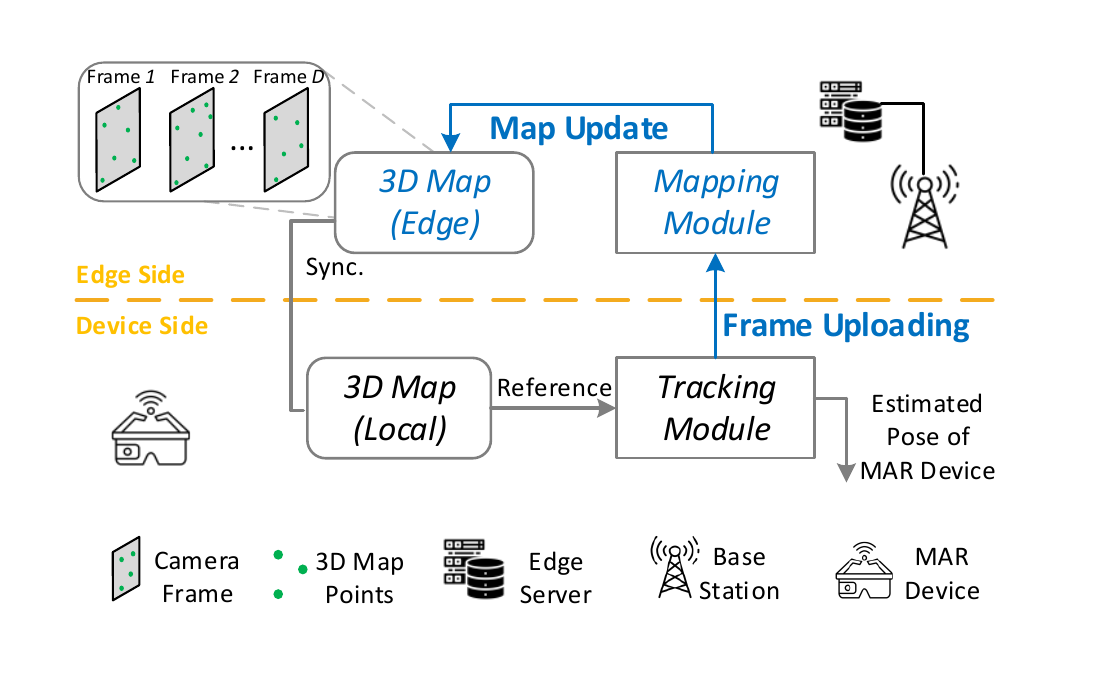}
        \caption{The considered scenario of edge-assisted MAR.}\label{system}
        \vspace{-0.5cm}
    \end{figure}

Consider the scenario of one MAR device running an indoor MAR application. The surrounding environment of the MAR device does not change over time~\cite{cozzolino2022nimbus}. The MAR device periodically captures camera frames and needs to estimate its real-time pose for aligning virtual objects with its surrounding environment. As shown in Fig.~\ref{system}, we adopt the architecture of edge-assisted MAR, which includes a tracking module and a mapping module~\cite{campos2021orb}, to support real-time pose estimation. The MAR device, containing the tracking module, estimates its pose for each frame based on a 3D map and can upload frames to an edge server deployed at a base station via a wireless link. The edge server, containing the mapping module, assists the MAR device in building and updating the 3D map based on the uploaded frames. The workflow includes four steps: 
	\begin{enumerate}
		\item \emph{Real-time pose estimation}: The MAR device estimates its pose for each frame by matching the 3D map points (i.e., 3D feature points in the surrounding environment) detected in this frame with those contained in the local 3D map (i.e.,~``3D Map (Local)'' in Fig.~\ref{system});

		\item \emph{Frame uploading}: The MAR device can select only a subset of all captured frames for uploading due to resource constraints;

		\item \emph{3D map update}: The edge server updates the 3D map (i.e.,~``3D Map (Edge)'' in Fig.~\ref{system}) by processing the uploaded frames;

		\item \emph{Synchronization}: The updated 3D map is sent back to the MAR device for subsequent pose estimation.
	\end{enumerate}
The real-time pose estimation (Step~1) operates on each frame and takes as short as several milliseconds to complete, while 3D map management (Steps~2-4) generally operates on a larger time scale (e.g., over seconds)~\cite{campos2021orb}. In this paper, we focus on~\emph{frame uploading} and \emph{3D map update} (corresponding to the blue arrows in Fig.~\ref{system}), detailed in Subsections~\ref{sec22} and~\ref{sec23}, respectively. Consider that the edge server manages the 3D map periodically, and refer to each period as a time slot. Denote the index set of all time slots and the set of all frames captured across all time slots by~$\mathcal{T}$ and~$\mathcal{F}$, respectively.

\subsection{Frame Uploading}\label{sec22}

Properly selecting a subset of frames from all captured frames during each time slot for uploading is crucial for the performance of 3D map management. A mmWave band featuring a high data rate is utilized to upload frames. Meanwhile, a sub-6\,GHz band with a relatively low data rate is used as a backup to ensure uninterrupted uplink communication in the case of mmWave band outage. We assume a constant uplink data rate within any given time slot and model the dynamic change of the uplink data rate across different time slots due to channel switching as a two-state Markov chain. Denote the probabilities of the high-rate state transiting from and to the low-rate state by $p_\text{l}$ and $p_\text{h}$, respectively. Without loss of generality, the amount of data (in bits) for uploading each frame, denoted by~$d$, is assumed to be identical. Denote the set of all frames captured during time slot~$t \in \mathcal{T}$ and the subset of frames selected for uploading by $\mathcal{F}_{t} \subseteq \mathcal{F}$ and~$\mathcal{U}_{t} \subseteq \mathcal{F}_{t}$, respectively. Determining the set~$\mathcal{U}_{t}$ for uploading should satisfy the following constraint:
    \begin{equation}\label{eq1}
        |\mathcal{U}_{t}| d \leq R_{t} \tau, \,\, \forall t \in \mathcal{T},
    \end{equation}
where $R_{t}$ denotes the uplink data rate (in bits per second) during time slot~$t$ and equals either~$R_\text{h}$ or~$R_\text{l}$, $\tau$ denotes the duration of each time slot, and $|\cdot|$ represents the cardinality of a set.

\subsection{3D Map Update}\label{sec23}

In this subsection, we first develop the model of 3D map in MAR and then introduce the 3D map update.

As shown in Fig.~\ref{system}, the edge server builds a 3D map by storing multiple frames, each of which corresponds to a set of 3D map points detected in this frame~\cite{campos2021orb,ran2019sharear}. As a result, the 3D map includes the union of multiple sets of 3D map points, in which each 3D map point corresponds to one or more frames. We model the 3D map as an undirected graph, denoted by~$\mathcal{G}^\text{e}_{t} = (\mathcal{D}^\text{e}_{t}, \mathcal{E}^\text{e}_{t})$ where $\mathcal{D}^\text{e}_{t}$ and $\mathcal{E}^\text{e}_{t}$ are the sets of nodes and edges, respectively. The set of nodes~$\mathcal{D}^\text{e}_{t} \in \mathcal{F}$ corresponds to the set of frames stored in the 3D map at the beginning of time slot~$t$. The set of edges~$\mathcal{E}^\text{e}_{t}$ corresponds to the set of co-visibility relationship between any two frames in~$\mathcal{D}^\text{e}_{t}$, i.e., the intersection of their respective sets of 3D map points~\cite{chen2023adaptslam}. Denote the edge between frames~$f, f' \in \mathcal{D}^\text{e}_{t}$ by~$e = (f, f') \in \mathcal{E}^\text{e}_{t}$. The weight of edge~$e$, denoted by~$w_{e}$, represents the number of 3D map points found in both frame~$f$ and frame~$f'$, given by:
	\begin{equation}\label{}
        w_{e} = |\mathcal{M}_{f} \cap \mathcal{M}_{f^{'}}|, \,\, \forall e \in \mathcal{E}^\text{e}_{t},
    \end{equation}
where $\mathcal{M}_{f}$ denotes the set of 3D map points corresponding to frame~$f \in \mathcal{F}$.

The edge server first updates the set of frames stored in the 3D map upon receiving the uploaded frames.\footnote{The phrases ``update a 3D map'' and ``update the set of frames contained in a 3D map'' are used interchangeable in this paper.} Since the computing and storage resources of the edge server are constrained for an individual MAR device, the following constraint should be satisfied for 3D map update:
    \begin{equation}\label{eq2}
        |\mathcal{D}^\text{e}_{t}| \leq D, \,\, \forall t \in \mathcal{T},
    \end{equation}
where $D$ denotes the maximum number of frames that can be processed and stored by the edge server for the MAR device. Given the set of newly uploaded frames,~i.e.,~$\mathcal{U}_{t}$, the 3D map during time slot~$t+1$, denoted by $\mathcal{D}^\text{e}_{t+1}$, evolves over two adjacent time slots as follows: 
    \begin{equation}\label{eq3}
        \mathcal{D}^\text{e}_{t+1} = \left\{ \mathcal{U}_{t} \cup \mathcal{D}^\text{e}_{t} \right\} \backslash \mathcal{C}_{t}, \,\, \forall t, t+1 \in \mathcal{T},
    \end{equation}
where~$\mathcal{C}_{t} \subseteq \mathcal{D}^\text{e}_{t}$ denotes the set of frames removed from set~$\mathcal{D}^\text{e}_{t}$ at the end of time slot~$t$. Determining set~$\mathcal{C}_{t}$ is a 3D map management decision in each time slot.

Additionally, the edge server needs to update the pose of the MAR device corresponding to each frame in set~$\mathcal{D}^\text{e}_{t}$ for real-time pose estimation. The edge server can update the pose in each frame by using maximum likelihood estimation based on co-visibility relationship among different frames~\cite{campos2021orb}, and both the updated set of frames stored in the 3D map and the estimated pose are sent back to the MAR device for updating the local 3D map, thereby supporting subsequent real-time pose estimation.

\subsection{Problem Formulation}

In this subsection, we first introduce the pose estimation uncertainty and then formulate a 3D map management problem to minimize the long-term pose estimation uncertainty. 

As mentioned in Subsection~\ref{sec23}, the edge server should estimate a set of poses corresponding to the frames stored in the 3D map based on their co-visibility relationship. The pose estimation uncertainty can quantify the impact of co-visibility relationship among the stored frames on the accuracy of estimated poses~\cite{chen2021cramer}.   

Let~$\hat{\bm{L}}(\mathcal{G}^\text{e}_{t})$ denote the reduced Laplacian matrix of a graph, i.e., 3D map~$\mathcal{G}^\text{e}_{t}$. The pose estimation uncertainty based on 3D map~$\mathcal{G}^\text{e}_{t}$ during time slot~$t$ can be approximated as follows~\cite{chen2023adaptslam,chen2021cramer}:
    \begin{equation}\label{eq5}
        u(\mathcal{G}^\text{e}_{t}) = - \log \left( \det ( \hat{\bm{L}}(\mathcal{G}^\text{e}_{t}) \otimes \boldsymbol{\Pi}) \right),
    \end{equation}
where notation~$\otimes$ represents the Kronecker product and~$\det(\cdot)$ denotes the determinant of a matrix. Matrix~$\boldsymbol{\Pi}$ in~\eqref{eq5} is constant and with the dimension of~$6\times6$ due to the six degrees of freedom of 3D pose, and the value of $\boldsymbol{\Pi}$ depends on the camera parameters of the MAR device. 

Then, we formulate a 3D map management problem to minimize the pose estimation uncertainty by selecting frames for uploading to update the 3D map. We define the obtained new 3D map if we add frame~$d$ into 3D map~$\mathcal{G}^\text{e}_{t} = (\mathcal{D}^\text{e}_{t}, \mathcal{E}^\text{e}_{t})$ as follows:
    \begin{equation}\label{}
        \mathcal{G}^\text{e}_{t} \cup \{ d \} := \left(\mathcal{D}^\text{e}_{t} \cup \{d\}, \mathcal{E}^\text{e}_{t}\cup \{e=(d,d')| d' \in \mathcal{D}^\text{e}_{t} \} \right),
    \end{equation}
where $\{e=(d,d')| d' \in \mathcal{D}^\text{e}_{t} \}$ is the set of newly generated edges due to adding frame~$d$. Denote the set of key frames, each containing a unique set of 3D map points~\cite{campos2021orb}, during time slot~$t+1$ by~$\mathcal{F}^\text{k}_{t+1} \subseteq \mathcal{F}_{t+1}$. Considering the dynamics of uplink quality and device pose, we formulate the long-term 3D map management problem as follows:
    \begin{subequations}\label{p1}
        \begin{align}
            \textrm{P1:} \,\, & \min_{ \{ \mathcal{U}_{t}, \mathcal{C}_{t} \}_{t \in \mathcal{T}} } \sum_{ t \in \mathcal T}{ |\mathcal{F}^\text{k}_{t+1}|^{-1} \sum_{f \in \mathcal{F}^\text{k}_{t+1}}{u(\mathcal{G}^\text{e}_{t} \cup \{ f \})} }\\
            \textrm{s.t.} &\,\, \eqref{eq1}, \eqref{eq2}, \eqref{eq3}, \\
            & \,\, \mathcal{U}_{t} \subseteq\mathcal{F}_{t}, \,\, \forall t \in \mathcal{T},\\     
            & \,\, \mathcal{C}_{t} \subseteq \mathcal{D}^\text{e}_{t} , \;\; \forall t \in \mathcal{T},
        \end{align}
    \end{subequations}
where $u(\mathcal{G}^\text{e}_{t} \cup \{ f \})$ represents the pose estimation uncertainty when a 3D map updated at time slot~$t$, i.e.,~$\mathcal{G}^\text{e}_{t}$, is used for estimating the pose of the MAR device at frame~$f$ during time slot~$t+1$. Problem~P1 is challenging due to two reasons. First, 3D map management for any given time slot relates to an NP-hard cardinality-fixed maximization problem~\cite{chen2023adaptslam}. Second, the 3D map management problem across multiple time slots is a sequential decision-making problem, in which the 3D map management decision for the current time slot affects those in successive time slots. Thus, making a 3D map management decision independently for each time slot may not be optimal in the long term.

\section{Proposed DT-based Solution}

To solve Problem~P1, we first transform it into a Markov decision process (MDP) problem and then propose a DT-based approach.  

\subsection{Problem Transformation}

We present the following lemma to show that the pose estimation uncertainty decreases when the number of frames stored in a 3D map increases.

\begin{lemma}\label{lemma1}
        If 3D map~$\mathcal{G}$ is connected, the value of~$u(\mathcal{G})$ monotonously decreases with the number of its nodes.
    \end{lemma}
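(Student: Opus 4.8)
The plan is to collapse the statement to a single determinant comparison and then exploit the positive-definiteness of the reduced Laplacian. Since adding a node to a connected graph cannot disconnect it, it suffices to prove the one-node step: if $\mathcal{G}'$ is obtained from a connected $\mathcal{G}$ on $n$ nodes by inserting one extra frame $d$ joined to $\mathcal{G}$ by at least one positively weighted co-visibility edge, then $u(\mathcal{G}') < u(\mathcal{G})$. The general claim then follows by induction on the number of added nodes.

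First I would apply the Kronecker-product determinant identity $\det(A \otimes \boldsymbol{\Pi}) = (\det A)^{6}(\det \boldsymbol{\Pi})^{\dim A}$ to rewrite \eqref{eq5} as $u(\mathcal{G}) = -6\log\det\hat{\bm{L}}(\mathcal{G}) - (n-1)\log\det\boldsymbol{\Pi}$, where $\hat{\bm{L}}(\mathcal{G})$ is the $(n-1)\times(n-1)$ reduced Laplacian and $\boldsymbol{\Pi} \succ 0$. This isolates the only graph-dependent quantity that can change with the node count, namely $\det\hat{\bm{L}}(\mathcal{G})$. The core step is then to compare $\det\hat{\bm{L}}(\mathcal{G}')$ with $\det\hat{\bm{L}}(\mathcal{G})$. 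Keeping the same anchor node, I would write the reduced Laplacian of $\mathcal{G}'$ in block form $\hat{\bm{L}}(\mathcal{G}') = \left[\begin{smallmatrix} \hat{\bm{L}}(\mathcal{G}) + \mathrm{diag}(\bm{s}) & -\bm{s} \\ -\bm{s}^{\top} & d_{n+1} \end{smallmatrix}\right]$, where $\bm{s} \ge 0$ collects the weights of the new edges from $d$ to the non-anchor nodes and $d_{n+1}$ is the weighted degree of $d$, and take the Schur complement with respect to the scalar $d_{n+1}$ to obtain $\det\hat{\bm{L}}(\mathcal{G}') = d_{n+1}\det\big(\hat{\bm{L}}(\mathcal{G}) + \mathrm{diag}(\bm{s}) - d_{n+1}^{-1}\bm{s}\bm{s}^{\top}\big)$.

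The subtlety I expect to be the main obstacle is that inserting $d$ also raises the degrees of its neighbours, so the top-left block is $\hat{\bm{L}}(\mathcal{G}) + \mathrm{diag}(\bm{s})$ rather than $\hat{\bm{L}}(\mathcal{G})$, and the correction $\mathrm{diag}(\bm{s}) - d_{n+1}^{-1}\bm{s}\bm{s}^{\top}$ must be controlled rather than discarded. I would handle this with a Cauchy--Schwarz estimate: since $d_{n+1} \ge \sum_i s_i$ and $\bm{s} \ge 0$, for every $\bm{x}$ we have $d_{n+1}^{-1}(\sum_i s_i x_i)^2 \le \sum_i s_i x_i^2$, hence $\mathrm{diag}(\bm{s}) - d_{n+1}^{-1}\bm{s}\bm{s}^{\top} \succeq 0$ and therefore $\hat{\bm{L}}(\mathcal{G}) + \mathrm{diag}(\bm{s}) - d_{n+1}^{-1}\bm{s}\bm{s}^{\top} \succeq \hat{\bm{L}}(\mathcal{G}) \succ 0$. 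Monotonicity of the determinant on the positive-definite cone then gives $\det\hat{\bm{L}}(\mathcal{G}') \ge d_{n+1}\det\hat{\bm{L}}(\mathcal{G})$.

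Combining with the first step yields $u(\mathcal{G}') \le u(\mathcal{G}) - \log\big(d_{n+1}^{6}\det\boldsymbol{\Pi}\big)$, so the strict decrease holds precisely when $d_{n+1}^{6}\det\boldsymbol{\Pi} > 1$. I would record this as the mild regularity condition on $\boldsymbol{\Pi}$ (equivalently, that each added frame carries enough co-visibility information), which is comfortably satisfied in the MAR setting where the co-visibility weights and the camera information matrix are well above unity. As an independent sanity check on the determinant bound, the inequality $\det\hat{\bm{L}}(\mathcal{G}') \ge d_{n+1}\det\hat{\bm{L}}(\mathcal{G})$ also follows from the weighted Matrix--Tree theorem, since every spanning tree of $\mathcal{G}$ extends to a distinct spanning tree of $\mathcal{G}'$ through each of $d$'s incident edges.
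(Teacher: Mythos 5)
Your proof is correct and it reaches the paper's conclusion by a genuinely different route. The paper reduces the insertion of a frame to two elementary operations---first attaching the new node by a single edge, then adding the remaining co-visibility edges one at a time---and argues at each step, by comparing Laplacian (cofactor) expansions of $\hat{\bm{L}}(\mathcal{G}')$ and $\hat{\bm{L}}(\mathcal{G}'')$ against $\hat{\bm{L}}(\mathcal{G})$, that the determinant of the reduced Laplacian strictly increases; connectivity is preserved at every step, which is why the hypothesis matters. You instead perform the entire node insertion in one shot: block form of the enlarged reduced Laplacian, Schur complement with respect to the new node's degree, the observation $\mathrm{diag}(\bm{s}) - d_{n+1}^{-1}\bm{s}\bm{s}^{\top} \succeq 0$, and determinant monotonicity on the positive-definite cone. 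What this buys is a quantitative bound $\det\hat{\bm{L}}(\mathcal{G}') \ge d_{n+1}\det\hat{\bm{L}}(\mathcal{G})$ where the paper obtains only a qualitative strict inequality, and your matrix--tree sanity check is exactly the combinatorial content of the paper's expansion argument. Two further points are in your favour. First, your Kronecker identity is the correct one: for $\hat{\bm{L}}$ of size $(|\mathcal{D}|-1)\times(|\mathcal{D}|-1)$ and $\boldsymbol{\Pi}$ of size $6\times 6$ one has $\det(\hat{\bm{L}}\otimes\boldsymbol{\Pi}) = \det(\hat{\bm{L}})^{6}\det(\boldsymbol{\Pi})^{|\mathcal{D}|-1}$, whereas \eqref{eqa} interchanges the two exponents. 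Second, you are right that the node count enters the exponent of $\det\boldsymbol{\Pi}$, so the monotone decrease of $u$ does \emph{not} follow from the determinant inequality alone; your explicit sufficient condition $d_{n+1}^{6}\det\boldsymbol{\Pi} > 1$ makes precise an assumption the paper leaves implicit (its last line simply asserts the conclusion, and in fact with the inequality sign reversed relative to the lemma statement). Flagging that condition is not a defect of your argument but a gap in the paper's.
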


    \begin{proof}

		Denote the 3D map when adding frame~$f'$ to 3D map~$\mathcal{G} = (\mathcal{D}, \mathcal{E})$ with a single extra edge, i.e., $|\mathcal{E}_{f'}| = 1$, and the 3D map when adding an extra edge~$e'$ between two existing frames in 3D map~$\mathcal{G}$ by~$\mathcal{G'} =  (\mathcal{D} \cup \{ f'\}, \mathcal{E} \cup \mathcal{E}_{f'})$ and $\mathcal{G''} =  (\mathcal{D}, \mathcal{E} \cup \{e'\})$, respectively. If~$\mathcal{G}$ is connected, both graphs~$\mathcal{G'}$ and~$\mathcal{G''}$ are connected. By comparing~$\hat{\bm{L}}(\mathcal{G})$ and the Laplacian expansions of~$\hat{\bm{L}}(\mathcal{G'})$ and~$\hat{\bm{L}}(\mathcal{G''})$, we can prove that~$\det(\hat{\bm{L}}(\mathcal{G})) < \det(\hat{\bm{L}}(\mathcal{G'}))$ and $\det(\hat{\bm{L}}(\mathcal{G})) < \det(\hat{\bm{L}}(\mathcal{G''}))$, respectively. Any 3D map~$\mathcal{G} \cup \{ f \}$ can be obtained by initially adding a frame to 3D map~$\mathcal{G}$ with a single extra edge, followed by iteratively repeating the operation of adding an extra edge between two existing frames. As a result, 3D map~$\mathcal{G} \cup \{ f \}$ is still connected, and~$\det(\hat{\bm{L}}(\mathcal{G})) < \det(\hat{\bm{L}}(\mathcal{G} \cup \{ f \}))$.

		The pose estimation uncertainty based on 3D map~$\mathcal{G}$ shown in~\eqref{eq5} can be given by:
			\begin{equation}\label{eqa}
		        \begin{aligned} 
		    		 u(\mathcal{G}) & = - \log \left( \det ( \hat{\bm{L}}(\mathcal{G}) \otimes \boldsymbol{\Pi} ) \right)\\
		            & = - \log \left( \det ( \hat{\bm{L}}(\mathcal{G}))^{|\mathcal{D}|-1} \det(\boldsymbol{\Pi})^{6} \right)
		        \end{aligned} 
			\end{equation}
		where $|\mathcal{D}|$ denotes the number of frames in 3D map~$\mathcal{G}$, and the dimension of~$\tilde{\bm{L}}(\mathcal{G})$ is $(|\mathcal{D}|-1) \times (|\mathcal{D}|-1)$. According to~\eqref{eqa}, we can prove that $u(\mathcal{G}) < u(\mathcal{G} \cup \{ f \})$. 
    \end{proof}

Since the pose estimation uncertainty based on an unconnected 3D map can approach positive infinity, we must keep the 3D map connected. Lemma~\ref{lemma1} allows us to reduce the solution space of Problem~P1 since the optimal solution to Problem~P1 must keep storing~$D$ frames at the edge server for any given time slot, i.e.,
   \begin{equation}\label{eq7}
		|\left\{ \mathcal{U}_{t} \cup \mathcal{D}^\text{e}_{t} \right\} \backslash \mathcal{C}_{t} | = D, \,\, \forall t \in \mathcal{T}.		 
   \end{equation}
    
To solve Problem~P1, we model the sequential decision making on 3D map management as an MDP. Denote the state space and the action space by $\mathcal{O}$ and $\mathcal{A}$, respectively. Let $o_{t} = [\mathcal{G}^\text{e}_{t-T}, \mathcal{F}_{t-T}, R_{t-T}, \cdots, \mathcal{G}^\text{e}_{t}, \mathcal{F}_{t}, R_{t}] \in \mathcal{O}$ and $a_{t} = [\mathcal{U}_{t}, \mathcal{C}_{t}] \in \mathcal{A}$ denote the state and the action, i.e., 3D map management decision, at the beginning of time slot~$t$, respectively. We define the reward function of action~$a_{t}$ on state~$o_{t}$ as the negative value of pose estimation uncertainty based on the updated 3D map, as follows:
   \begin{equation}\label{eq10}
		r(o_{t}, a_{t}) =  - |\mathcal{F}^\text{k}_{t+1}|^{-1} \sum_{f \in \mathcal{F}^\text{k}_{t+1}}{u(\mathcal{D}^\text{e}_{t} \cup \{ f \})}.		 
   \end{equation}
Problem~P1 can be reformulated as the following discounted MDP problem~\cite{zhou2020deep}:
    \begin{subequations}\label{p3}
        \begin{align}
            \textrm{P2:} \,\, & \max_{ \{ a_{t} \}_{t \in \mathcal{T}} } \sum_{ t \in \mathcal T}{ \gamma^{t} r(o_{t}, a_{t}) }\\
            \textrm{s.t.} &\,\, \eqref{eq1}, \eqref{eq3}, (\ref{p1}\text{c}), (\ref{p1}\text{d}), \eqref{eq7},
        \end{align}
    \end{subequations}
where~$\gamma \in (0,1)$ is the discount factor for quantifying the impact of an action on the rewards obtained in future time slots. Our goal is to find a policy, i.e.,~$\pi(o)$, for making proper 3D map management decisions in each state.

\subsection{DT-based 3D Map Management}\label{sec32}

Generally, reinforcement learning (RL) is an effective method for addressing sequential decision making problems in dynamic environments. However, conventional model-free RL algorithms face the low data-efficiency issue in finding the optimal policy when the state space is high-dimensional~\cite{zhou2020deep,li2020deep}. In Problem~P2, the lower bound of the dimension of each state is~$D^{\binom{|\mathcal{M}_{f}|}{|\mathcal{M}|}}$ where~$\binom{|\mathcal{M}_{f}|}{|\mathcal{M}|}$ represents the number of combinations of detecting~$|\mathcal{M}_{f}|$ 3D map points in a frame from the set of all 3D map points~$\mathcal{M}$. To address this issue, we propose a DT-based approach by using MBRL, consisting of the following four modules. 

\subsubsection{Real experience collection} A DT created for the MAR device is deployed at the edge server. The DT contains the real experiences of 3D map management collected in past time slots for learning the 3D map management policy in subsequent time slots. A real experience collected in time slot~$t+1$ is defined as a tuple~$\xi_{t} = (o_{t}, a_{t}, r(o_{t}, a_{t}), o_{t+1})$. Let~$\Xi^\text{r}$ represent the set of collected real experiences contained in the DT, which can be updated by the DT per time slot by collecting a new real experience.

\subsubsection{Map point prediction} The collected real experiences in~$\Xi^\text{r}$ are used to predict state transition, in addition to learning the 3D map management policy as model-free RL. According to~\eqref{eq3}, the basis of predicting state transition in this problem is to predict the associated 3D map points of all frames during the subsequent time slot, i.e.,~$\{\mathcal{M}_{f} \}_{f \in \mathcal{F}_{t}}$. We use a recurrent neural network (RNN) to approximate the mapping from the set of 3D map points detected in the frames during time slots~$t-1$ to $t-T$ to the set of 3D map points detected in the frames during time slot~$t$, as follows:
   \begin{equation}\label{eq12}
		 \{\tilde{\mathcal{M}}_{f} \}_{f \in \tilde{\mathcal{F}}_{t}} = \phi \left(\left\{\mathcal{M}_{f'} \right\}_{f' \in \mathcal{F}_{t-1} \cup \cdots \cup \mathcal{F}_{t-T} }; \boldsymbol{\theta} \right), 	 
   \end{equation}
where~$\{\tilde{\mathcal{M}}_{f} \}_{f \in \tilde{\mathcal{F}}_{t}}$ is the predicted 3D map points associated with the frames during time slot~$t$, and $\boldsymbol{\theta}$ denotes the parameters of the RNN. Based on the collected real experiences from~$\Xi^\text{r}$, the optimal parameters of the RNN are trained by using the following equation:
   \begin{equation}\label{eq13}
		 \boldsymbol{\theta}^{*} = \arg \min_{ \left\{\boldsymbol{\theta} \right\} } \| \left\{\mathcal{M}_{f} \right\}_{f \in \mathcal{F}_{t}} -  \{\mathcal{\tilde{M}}_{f} \}_{f \in \tilde{\mathcal{F}}_{t}}  \|^{2}. 	 
   \end{equation}
Parameter~$\boldsymbol{\theta}$ can be updated when real experiences contained in the DT are updated.

\subsubsection{Artificial experience generation} The RNN enables the DT to emulate the state transition given a 3D map management decision, thereby generating artificial experiences. Specifically, given a state~$\tilde{o} \in \mathcal{O}$, an artificial action~$\tilde{a} \in \mathcal{A}$ can be determined based on a random policy~$\tilde{\pi}(\tilde{o})$. Assuming how the set of representative frames~$\tilde{\mathcal{F}}^\text{k}_{t+1}$ is selected from~$\tilde{\mathcal{F}}_{t+1}$ known~\emph{a priori}, we can calculate the reward given~$\tilde{o}$ and~$\tilde{a}$, i.e.,~$r(\tilde{o}, \tilde{a})$, by~\eqref{eq10}. For emulating the next state~$\tilde{o}^{'} \in \mathcal{O}$, the DT predicts the set of frames and their associated 3D map points during subsequent time slots by using the RNN as shown in~\eqref{eq12}, thereby obtaining the updated 3D map based on~\eqref{eq3}. The uplink data rate in the next state can be predicted based on the modeled two-state Markov chain presented in Subsection~\ref{sec22}. Since action~$\tilde{a}$ is not actually implemented for 3D map management, we define such an artificial experience as a tuple~$\tilde{\xi} = (\tilde{o}, \tilde{a}, r(\tilde{o}, \tilde{a}), \tilde{o}^{'} ) \in \Xi^\text{a}$ where~$\Xi^\text{a}$ is a set of artificial experiences also contained in the DT.

	\begin{algorithm}[t] 
		\caption{AMM Algorithm}\label{alg1}
		\LinesNumbered
		\textbf{Input:} $N$ and $|\mathcal{I}|$\\
		\textbf{Initialization:} $\boldsymbol{\vartheta}^\text{c}$, $\boldsymbol{\vartheta}^\text{a}$, $\boldsymbol{\theta}$, $o_{1}$\\
		\For{$t \in \mathcal{T}$}
		{	
			Select action $a_{t} = \pi \left( o_{t}; \boldsymbol{\vartheta}^\text{a} \right)$;\\
			$r(o_{t}, a_{t})$, $o_{t+1}$ $\leftarrow$ take action $a_{t} $ on state~$o_{t}$;\\
			$\Xi^\text{r}$ $\leftarrow$ update real experiences with tuple $(o_{t}, a_{t}, r(o_{t}, a_{t}), o_{t+1})$;\\

			Randomly sample $|\mathcal{I}|$ tuples from $\Xi^\text{r}$;\\
			Set $y_{i} = r(o_{i}, a_{i}) + \gamma Q (o_{i}, \pi(o_{i}; \boldsymbol{\vartheta}^\text{a}); \boldsymbol{\vartheta}^\text{c})$; \\
			Update $\boldsymbol{\vartheta}^\text{c}$ by minimizing $\frac{1}{|\mathcal{I}|} \sum_{\mathcal{I}} \left(y_{i} - Q(o_{i}, a_{i}; \boldsymbol{\vartheta}^\text{c}) \right)^{2}$;\\
			Update $\boldsymbol{\vartheta}^\text{a}$ by using policy gradient descent in~\eqref{eq15};\\

			Update $\boldsymbol{\theta}$ by optimizing~\eqref{eq13} based on $\Xi^\text{r}$;\\

			$\Xi^\text{a}$ $\leftarrow$ Update artificial experiences based on $\Xi^\text{r}$;\\
			
			\For{$n = 1, \cdots N$}
			{
				Randomly sample $|\mathcal{I}|$ tuples from $\Xi^\text{a}$;\\
				Update $\boldsymbol{\vartheta}^\text{c}$ and $\boldsymbol{\vartheta}^\text{a}$ by repeating lines~9-10 based on the sampled artificial experiences;\\

			} 
			$t$, $o_{t}$ $\leftarrow$ $t+1$, $o_{t+1}$;\\   	
		}
		\textbf{Output:} $\pi^{*}(o;\boldsymbol{\vartheta}^\text{a})$

	\end{algorithm}

\subsubsection{MBRL based on blended experiences} Leveraging the combination of the real and the artificial experiences, referred to as \emph{blended experiences}, we propose the adaptive map management (AMM) algorithm based on MBRL to learn a long-term 3D map management policy~i.e.,~$\pi(o)$.

Define a Q-value function of state~$o_{t}$ and action~$a_{t}$ as the accumulated discounted reward, as follows: 
	\begin{equation}\label{}
		Q(o = o_{t}, a = a_{t}) =  \sum_{k = 1}^{K}{\gamma^{k} r(o_{t+k+1}, a_{t+k+1})},
	\end{equation}
where the Q-value quantifies the long-term impact of each action on the subsequent states as well as actions~\cite{zhou2020deep}. Due to the high-dimensional state space, we use a DNN with parameter~$\boldsymbol{\vartheta}^\text{c}$ to approximate the Q-value function, i.e.,~$Q(o, a; \boldsymbol{\vartheta}^\text{c})$.
Moreover, we approximate the 3D map management policy~$\pi(o; \boldsymbol{\vartheta}^\text{a})$ by using another DNN with parameter~$\boldsymbol{\vartheta}^\text{a}$. Given~$Q(o, a; \boldsymbol{\vartheta}^\text{c})$, the policy gradient is given for finding the parameter of the optimal policy~$\pi^{*}$:
	\begin{equation}\label{eq15}
		\nabla_{\boldsymbol{\vartheta}^\text{a}} \Omega (\boldsymbol{\vartheta}^\text{a}) = \frac{1}{|\mathcal{I}|} \sum_{i \in \mathcal{I}} \nabla_{\mathbf{w}} Q \left(o_{i}, a_{i} ; \boldsymbol{\vartheta}^\text{c} \right) \big|_{\pi(o_{i}; \boldsymbol{\vartheta}^\text{a})} \nabla_{ \boldsymbol{\vartheta}^\text{a} }\pi\left( o_{i}; \boldsymbol{\vartheta}^\text{a} \right),
	\end{equation}
where $\mathcal{I}$ denotes a set of tuples that can be selected from~$\Xi^\text{r}$,~$\Xi^\text{a}$, or both, and $\Omega (\boldsymbol{\vartheta}^\text{a})$ represents the objective function in Problem~P2 in terms of parameters~$\boldsymbol{\vartheta}^\text{a}$. To learn the impact of co-visibility relationship among frames on 3D map management, we leverage a graph convolutional network (GCN) for building the two DNNs. 

The developed AMM algorithm featuring MBRL is presented in Algorithm~\ref{alg1}. In lines~1-2, we set the batch size~$|\mathcal{I}|$ and determine the value of~$N$ for adjusting the ratio of real and artificial experiences. Procedures in lines~3-10 correspond to conventional model-free RL methods based on real experiences~\cite{zhou2020deep}. In lines~11-12, the DT trains the RNN by using the collected real experiences and generate artificial experiences. The parameters of two DNNs for approximating the Q-value and policy functions are optimized by using the generated artificial experiences in lines~13-16. Due to the consideration of long-term impact of 3D map management, the developed MBRL-based algorithm can adapt to time-varying pose of the MAR device and dynamic uplink data rate. Moreover, the developed AMM algorithm is more data-efficient than model-free RL since the real experiences can be utilized to generate lots of artificial experiences, thereby accelerating the exploration process in a high-dimensional state space.

\section{Performance Evaluation}

\subsection{Simulation Settings}

\begin{table}[t]
	\footnotesize 
	\centering
	\captionsetup{justification=centering,singlelinecheck=false}
	\caption{Simulation Parameters}\label{table1}
	\begin{tabular}{c|c|c|c}
		\hline\hline
		 Parameter & Value & Parameter & Value\\
		 \hline\hline
		 $\tau$ & 2 sec. & $d$ & 2\,Mbits \\
		 \hline
		 $R_\text{h}$& 20\,Mbits/sec. & $R_\text{l}$ & 8\,Mbits/sec.\\
		 \hline
		 $D$ & 25 frames& $|\mathcal{F}_{t}|$ & 60 frames\\
		 \hline
	\end{tabular}
	\vspace{-0.5cm}
\end{table}

    \begin{figure}[t]
        \centering
        \includegraphics[width=0.35\textwidth]{./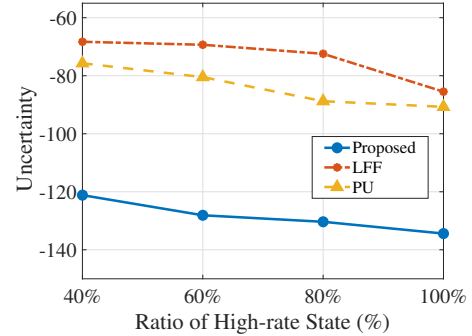}
        \caption{Performance comparison in different uplink data rates.}\label{fig1}
        \vspace{-0.6cm}
    \end{figure}

We use data collected in a real indoor environment, i.e., the ``westgate-playroom'' camera frame sequence in the SUN3D dataset~\cite{xiao2013sun3d}. The set of 3D map points detected in each frame~$\mathcal{M}_{f}$ and the set of representative frames in each time slot~$\mathcal{F}^\text{k}_{t}$ are obtained using the open-source ORB-SLAM framework~\cite{campos2021orb}. Other parameters are listed in Table~I.

For the DNNs in the developed AMM algorithm, a GCN, i.e., two graph convolutional layers with (128, 32) neurons, is utilized as an embedding layer. Following the embedding layer, three fully-connected layers with (64, 32, 32) neurons and four fully-connected layers with (64, 32, 16, 4) neurons are used to approximate the Q-value function and the policy function, respectively. For the RNN used in artificial experience generation, we utilize three long-short-term-memory (LSTM) layers with (128, 32, 16) neurons, followed by two full-connected layers with (512, 64) neurons.

We adopt the following two popular 3D map management schemes in MAR as benchmark~\cite{campos2021orb,apicharttrisorn2020characterization}:
    \begin{itemize}
        \item \emph{Latest Frame First (LFF):} The 3D map is updated by using the lastly captured frames;

        \item \emph{Periodical uploading (PU):} Frames are selected evenly from a frame sequence within a time slot, and the earliest captured frames are removed from the 3D map. 
    \end{itemize}

\subsection{Performance of DT-based 3D Map Management}

We divide the camera frame sequence \bl{into} two datasets: a training dataset containing 11,000 frames and a testing dataset comprising 2,500 frames. First, we evaluate the performance of the designed DT-based scheme with well-trained DNNs on the training dataset, as shown in Fig.~\ref{fig1}. Subsequently, we run the scheme with the DNNs on the testing dataset to evaluate the performance of the DT-based scheme in an unknown environment, as shown in Fig.~\ref{fig2}.

\bl{In Fig.~\ref{fig1}, we compare the performance of our DT-based scheme with that of two benchmark schemes under various network dynamics.} Each point in the figure is averaged \bl{over} 15 independent simulation runs. By setting different transition matrices for the two-state Markov chain determining the uplink data rate, we change the ratio of the high-rate state over all time slots. We observe that the proposed scheme can maintain a 3D map with a lower pose estimation uncertainty compared to the benchmark schemes under various ratios of high-rate state. This is because the DT-based scheme utilizing MBRL can learn a 3D map management policy by taking into account the long-term impact of each action in a dynamic environment, as opposed to the myopic 3D map management adopted by the benchmark schemes.

    \begin{figure}[t]
        \centering
        \includegraphics[width=0.33\textwidth]{./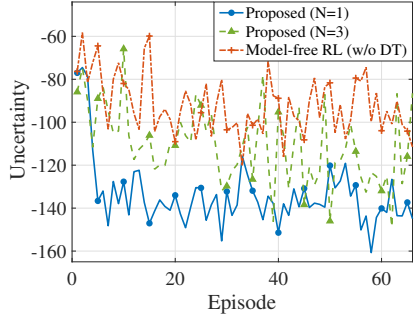}
        \caption{Convergence in an unknown environment.}\label{fig2}
        \vspace{-0.6cm}
    \end{figure}

\bl{In Fig.~\ref{fig2}, we compare the convergence performance of our DT-based scheme with that of a model-free RL, i.e., DDPG, without using the DT (labeled as "Model-free RL (w/o DT)") on the testing dataset.} Each episode spans 2,500 frames. We can make two observations from this figure. First, we observe that the designed scheme converges to a better policy in fewer episodes than benchmark schemes. This is because leveraging additional emulated data, i.e., artificial experiences provided by the DT, in training can accelerate policy exploration in an unknown environment. Second, we adjust the amount of emulated data used in each training iteration as mentioned in Subsection~\ref{sec32}, corresponding to the parameter~$N$ in the legend of Fig~\ref{fig2}. We observe that the policy learned by the AMM algorithm may not be stationary when an increasing amount of emulated data is used due to the limited accuracy of the RNN used for generating artificial experiences. Therefore, the amount of emulated data used in training should be carefully adjusted.

\section{Conclusion and Future Work}

In this paper, we have designed a DT-based 3D map management scheme to facilitate pose estimation in edge-assisted MAR. The DT created for the MAR device can emulate 3D map management based on predicting subsequent frames and provide extensive emulated data. For minimizing pose estimation uncertainty, our MBRL algorithm learned a 3D map management policy based on the data collected from both the actual and the emulated 3D map management. The results have demonstrated the adaptivity and data efficiency of the DT-based scheme in dynamic environments. The designed scheme establishes a foundation for customizing DTs to differentiate 3D map management policies for different MAR devices. In the future, we will extend 3D map management to encompass multiple MAR devices while considering their unique time-varying poses and resource constraints.

\bibliography{ref}

\bibliographystyle{IEEEtran}

\end{document}